\newtheorem{theorem}{Theorem}[section]
\newtheorem{lemma}[theorem]{Lemma}
\newtheorem{exe}[theorem]{Example}
\newtheorem{rem}[theorem]{Remark}
\newtheorem{defi}[theorem]{Definition}
\newtheorem{algo}[theorem]{Algorithm}
\newtheorem{Not}[theorem]{Notation}
\numberwithin{equation}{section}
\newcommand{\R}{\mathbb{R}}
\newcommand{\Z}{\mathbb{Z}}
\newcommand{\N}{\mathbb{N}}
\newcommand{\pr}{^{\prime}}
\DeclareMathOperator{\bit}{bit}
\DeclareMathOperator{\Syl}{Syl}
\DeclareMathOperator{\Res}{Res}
\DeclareMathOperator{\Der}{Der}
\DeclareMathOperator{\Sign}{Sign}
\begin{document}

\title{A bound on the minimum of a real positive polynomial over the standard simplex}
\author{Saugata Basu, Richard Leroy, Marie-Françoise Roy}
\date{\today}
\maketitle

\begin{abstract}
We consider the problem of bounding away from $0$ the minimum value $m$ taken by a polynomial $P \in \Z \left[X_{1},\dots,X_{k}\right]$ over the standard simplex $\Delta \subset \R^{k}$, assuming that $m>0$. Recent algorithmic developments in real algebraic geometry enable us to obtain a positive lower bound on $m$ in terms of the dimension $k$, the degree $d$ and the bitsize $\tau$ of the coefficients of $P$. The bound is explicit, and obtained without any extra assumption on $P$, in contrast with previous results reported in the literature.
\end{abstract}

\section{Introduction}
\subsection{Problem statement}
Let $P \in \Z \left[X_{1},\dots,X_{k}\right]$ be a multivariate polynomial of degree $d$ taking only positive values on the $k-$dimensional simplex
\[\Delta = \Big\{x \in \R_{\geq 0}^{k}  \Big\vert \sum\limits_{i=1}^{k}x_{i} \leq 1\Big\}.\]
Let $\tau$ be an upper bound on the bitsize of the coefficients of $P$. Writing 
\[m = \mathop {\min }\limits_{\Delta} P >0,\]
we consider the problem of finding an explicit bound $m_{k,d,\tau}$ depending only on $k$, $d$ and $\tau$ such that $0<m_{k,d,\tau} < m$.
%MF could be equal
\subsection{Previous work}
Several authors have worked on this subject. There are two main approaches: Canny's gap theorem can be used, under non-degeneracy conditions (\cite{C}); in \cite{LS}, the authors use the Lojasiewicz inequality, leading to a bound in the general case, but involving a universal constant. The method presented here gives an explicit bound, with no extra assumption on $P$.
\subsection{Univariate case}
We begin with the univariate case, which contains 
%MF the 
some
basic ideas of the proof in the general case. This situation has already been studied in \cite{BCR}. We present here a simpler proof, leading to a slightly better bound. 
\\ \ 
\\
Consider a univariate polynomial of degree $d$
\[P = \sum\limits_{i=0}^{d}a_{i}T^{i} \in \Z \left[T\right],\]
taking only positive values on the interval $\left[0,1\right]$. Let $\tau$ be a bound on the bitsize of its coefficients.
\\ The minimum $m$ of $P$ on $\left[0,1\right]$ occurs either at $0$ or $1$, or at a point $x^{*}$ lying in the interior $] 0,1[$. The first case is trivial, as $P(0),P(1) \in \Z$, so that $m$ is clearly greater than $1$. In the second case, $P \left(x^{*}\right) = 0$, so that $m$ is a root of the resultant $R(Z) = \Res_{T}\left( P(T) -Z, P'(T) \right) \in \Z\left[Z\right]$. The resultant $R(Z)$ is the determinant of the matrix $\Syl(Z)$, where $\Syl(Z)$ is the following Sylvester matrix:
\[
{{\begin{pmatrix}
 {a_d } &  \cdots  &  \cdots  &  \cdots  & {a_1 } & {a_0  - Z} & 0 &  \cdots  & 0  \\
 0 &  \ddots  & {} & {} & {} & {} &  \ddots  &  \ddots  &  \vdots   \\
 \vdots  &  \ddots  &  \ddots  & {} & {} & {} & {} &  \ddots  & 0  \\
 0 &  \cdots  & 0 & {a_d } &  \cdots  &  \cdots  &  \cdots  & {a_1 } & {a_0  - Z}  \\
 {(d-1)a_{d-1} } &  \cdots  &  \cdots  &  \cdots  & {a_1 } & 0 &  \cdots  &  \cdots  & 0  \\
 0 &  \ddots  & {} & {} & {} &  \ddots  &  \ddots  & {} &  \vdots   \\
 \vdots  &  \ddots  &  \ddots  & {} & {} & {} &  \ddots  &  \ddots  &  \vdots   \\
 \vdots  & {} &  \ddots  &  \ddots  & {} & {} & {} &  \ddots  & 0  \\
 0 &  \cdots  &  \cdots  & 0 & {(d-1)a_{d-1} } &  \cdots  &  \cdots  &  \cdots  & {a_1 } 
\end{pmatrix}}\begin{array}{*{20}c}
{\left. \begin{array}{l}
  \vspace{1.5 cm}
 \end{array} \! \! \! \! \! \! \!  \right\}\  \ d-1}  \\
   {\left. \begin{array}{l}
  \vspace{2.1 cm}
 \end{array}\! \! \! \! \! \! \!  \! \! \! \! \! \! \! \! \! \! \! \! \!\! \right\}d}  \\
\end{array}}
\]
$R(Z) = \sum\limits_{i=0}^{d-1} r_i Z^i$ is thus a polynomial of degree $d-1$ in $Z$, whose coefficients are controlled in the following fashion:
\begin{lemma}
\label{r_uni}
For all $i \in \{0,\dots,d-1\}$, we have
\[\left|r_i\right| < 3^{-d/2}\left[2^{\tau}\sqrt{(d+1)^{3}}\right]^{d}{d-1 \choose i} \left[2^{\tau}\sqrt{d+1}-1\right]^{d-1-i}.\]
\end{lemma}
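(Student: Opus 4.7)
The plan is to read off the coefficients $r_i$ of $R(Z) = \det \Syl(Z)$ in powers of $Z$ by exploiting the fact that $Z$ enters $\Syl(Z)$ only through the $d-1$ entries $a_0 - Z$ in the first $d-1$ rows. I would split each such row as the sum of the corresponding ``pure $P$'' row (with $a_0$ in place of $a_0 - Z$) and the row $(-Z)\, e_{j+d}$, where $e_{j+d}$ is the unit row supported at column $j+d$. Multilinearity of the determinant in its rows then yields
\[
\det \Syl(Z) = \sum_{S \subseteq \{1,\dots,d-1\}} (-Z)^{|S|} \det N_S,
\]
where the $j$-th row of $N_S$ is $e_{j+d}$ for $j \in S$, the ``pure $P$'' row for $j \notin S$, and the last $d$ rows of $N_S$ are the $P'$-rows as in $\Syl(Z)$. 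Reading off the coefficient of $Z^i$ gives $r_i = (-1)^i \sum_{|S|=i} \det N_S$.

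Next, I would expand each $\det N_S$ by Laplace along its $i$ unit rows $e_{j+d}$, $j \in S$. Each such expansion contributes a factor $\pm 1$, so $\det N_S = \pm \det N_S'$, where $N_S'$ is the $(2d-1-i)\times(2d-1-i)$ matrix obtained from $N_S$ by deleting the $i$ rows indexed by $S$ and the $i$ columns $\{\,j+d : j \in S\,\}$. The rows of $N_S'$ consist of $d-1-i$ (possibly truncated) $P$-rows and $d$ (possibly truncated) $P'$-rows; since deleting columns only decreases the Euclidean norm of every row, the uniform norm bounds below still apply.

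Applying Hadamard's inequality to $N_S'$ bounds $|\det N_S'|$ by the product of its row norms. A $P$-row has norm at most $\sqrt{\,\sum_{k=0}^d a_k^2\,} \le (2^\tau - 1)\sqrt{d+1} \le 2^\tau\sqrt{d+1} - 1$. A $P'$-row has squared norm at most $(2^\tau - 1)^2 \sum_{k=1}^d k^2 = (2^\tau - 1)^2\, d(d+1)(2d+1)/6$, which is itself bounded by $2^{2\tau}(d+1)^3/3$ thanks to the elementary inequality $d(2d+1) \le 2(d+1)^2$. Combining these row-norm estimates,
\[
|\det N_S| \;\le\; \bigl[2^\tau \sqrt{d+1} - 1\bigr]^{d-1-i}\, 3^{-d/2}\bigl[2^\tau \sqrt{(d+1)^3}\bigr]^d,
\]
and summing the $\binom{d-1}{i}$ contributions with $|S| = i$ produces the claimed bound on $|r_i|$.

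The only real subtlety is the bookkeeping of the multilinear expansion together with the Laplace expansion along unit rows, together with the observation that deleting columns only helps in the Hadamard estimate; once that is done cleanly, everything else is a direct application of Hadamard's inequality and two elementary sum-of-squares bounds on $\|P\|$ and $\|P'\|$.
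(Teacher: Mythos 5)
Your proof is correct and follows essentially the same route as the paper: expand $\det\Syl(Z)$ by multilinearity in the $d-1$ rows containing $Z$, identify $r_i$ as a sum of $\binom{d-1}{i}$ determinants, and bound each by Hadamard's inequality using the row-norm estimates $\bigl(2^\tau-1\bigr)\sqrt{d+1}\le 2^\tau\sqrt{d+1}-1$ for the $P$-rows and $\bigl(2^\tau-1\bigr)^2 d(d+1)(2d+1)/6< 2^{2\tau}(d+1)^3/3$ for the $P'$-rows. The only cosmetic difference is that you first Laplace-expand along the unit rows before applying Hadamard, whereas the paper simply keeps those rows in the Hadamard product (each contributing a factor $1$); the two are equivalent.
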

\begin{proof}
Let $\left(A_{1},\dots,A_{d-1},B_{1},\dots,B_{d}\right)$ denote the rows of the classical Sylvester matrix $\Syl(0)$. Then
\[R(Z) = \det\left(A_{1}+Ze_{d+1},\dots,A_{d-1}+Ze_{2d-1},B_{1},\dots,B_{d}\right),\]
where $\left(e_{1},\dots,e_{2d-1}\right)$ is the canonical basis of $\R^{2d-1}$. Using the multilinearity of the determinant, we can write $R(Z) = \sum\limits_{i=0}^{d-1} r_i Z^i$, where, for all $i \in \{0,\dots,d-1\}$, $r_{i}$ is a sum of ${d-1 \choose i}$ determinants of matrices built with:
\begin{itemize}
\item[-] $i$ rows among the $e_{j}$'s
\item[-] $d-1-i$ rows among the $A_{j}$'s
\item[-] the $d$ rows $B_{1},\dots,B_{d}$.
\end{itemize}
Hadamard's bound (see \cite{BPR}) implies that, for all $i$:
\begin{align*}
\left|r_i\right| & \leq {d-1 \choose i} \sqrt{\left[(d+1)\left(2^{2\tau}-1\right)\right]^{d-1-i}} \sqrt{\left[\dfrac{d(d+1)(2d+1)}{6}\left(2^{2\tau}-1\right)\right]^{d-1-i}}\\
& <  {d-1 \choose i} \left[2^{\tau}\sqrt{d+1}-1\right]^{d-1-i} \left[2^{\tau}\sqrt{\dfrac{(d+1)^{3}}{3}}\right]^{d}\\
&\leq 3^{-d/2}\left[2^{\tau}\sqrt{(d+1)^{3}}\right]^{d}{d-1 \choose i} \left[2^{\tau}\sqrt{d+1}-1\right]^{d-1-i},
\end{align*}
as claimed.
\end{proof}
Since the minimum $m$ is a root of $R(Z)$, Cauchy's bound finally implies the following theorem
\begin{theorem}
Let $P \in \Z \left[T\right]$ be a univariate polynomial of degree $d$ taking only positive values on the interval $\left[0,1\right]$. Let $\tau$ be an upper bound on the bitsize of the coefficients of $P$. Let $m$ denote the minimum of $P$ over $\left[0,1\right]$. Then
\[m> \dfrac{3^{d/2}}{2^{(2d-1)\tau} (d+1)^{2d-1/2}}.\]
\end{theorem}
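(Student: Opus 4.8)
The plan is to feed Lemma~\ref{r_uni} into a Cauchy-type root separation bound, after first disposing of the case where the minimum sits at an endpoint of $[0,1]$.

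First I would use compactness: $m$ is attained at some $x^{\ast}\in[0,1]$. If $x^{\ast}\in\{0,1\}$ then $m=P(x^{\ast})$ is a positive integer, so $m\geq 1$, which already exceeds the claimed bound (one checks $3^{d/2}/\bigl(2^{(2d-1)\tau}(d+1)^{2d-1/2}\bigr)<1$ for $d,\tau\geq 1$), and we are done. Otherwise $x^{\ast}$ is an interior critical point, so $x^{\ast}$ is a common root of $P(T)-m$ and $P'(T)$ and therefore $R(m)=\Res_{T}(P(T)-m,P'(T))=0$, with $m>0$.

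Next I would reduce to a polynomial with nonzero constant term. Since $P$ is non-constant, $R$ is not identically $0$ (its degree in $Z$ equals the number of critical points of $P$ counted with multiplicity, namely $d-1$), so I can write $R(Z)=Z^{\nu}\tilde R(Z)$ with $\tilde R(0)=r_{\nu}\neq 0$; as $m\neq 0$ is a root of $R$ it is a root of $\tilde R$, which is thus non-constant. Being a resultant of polynomials over $\Z$, we have $R\in\Z[Z]$, hence $r_{\nu}\in\Z\setminus\{0\}$ and $|r_{\nu}|\geq 1$. Cauchy's bound applied to $\tilde R$ (equivalently, Cauchy's upper bound for the reciprocal polynomial) then gives
\[
m \;\geq\; \frac{|r_{\nu}|}{\,|r_{\nu}|+\max_{\nu+1\leq i\leq d-1}|r_{i}|\,}\;\geq\;\frac{|r_{\nu}|}{\displaystyle\sum_{i=0}^{d-1}|r_{i}|},
\]
using that a finite maximum is at most the corresponding sum and that $r_{i}=0$ for $i<\nu$.

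Finally I would sum the bounds of Lemma~\ref{r_uni} over $i$: the binomial theorem $\sum_{i=0}^{d-1}\binom{d-1}{i}\bigl[2^{\tau}\sqrt{d+1}-1\bigr]^{d-1-i}=\bigl[2^{\tau}\sqrt{d+1}\bigr]^{d-1}$ makes the sum collapse to exactly $3^{-d/2}\bigl[2^{\tau}\sqrt{(d+1)^{3}}\bigr]^{d}\bigl[2^{\tau}\sqrt{d+1}\bigr]^{d-1}=2^{(2d-1)\tau}(d+1)^{2d-1/2}/3^{d/2}$, so $\sum_{i}|r_{i}|$ is \emph{strictly} below this quantity; dividing $|r_{\nu}|\geq 1$ by it yields $m>3^{d/2}/\bigl(2^{(2d-1)\tau}(d+1)^{2d-1/2}\bigr)$. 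The point requiring care is pinning down the constant: the crude Cauchy estimate with $\max_{i}|r_{i}|$ in the denominator only gives $m>1/(1+\max_{i}|r_{i}|)$, which is too weak; the clean bound comes from exploiting $|r_{\nu}|\geq 1$ to put the \emph{entire} sum $\sum_{i}|r_{i}|$ in the denominator, after which the binomial identity delivers precisely the target constant. Handling the case $r_{0}=0$ (a repeated root of $P$) by passing to $\tilde R$ is the only other precaution.
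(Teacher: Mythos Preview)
Your proposal is correct and follows essentially the same route as the paper: dispose of the endpoint case, note that an interior minimum makes $m$ a root of $R(Z)=\Res_T(P(T)-Z,P'(T))$, invoke Cauchy's bound in the form $1/m\leq\sum_i|r_i|$, and then collapse the sum of the estimates from Lemma~\ref{r_uni} via the binomial identity $\sum_{i}\binom{d-1}{i}[2^\tau\sqrt{d+1}-1]^{d-1-i}=[2^\tau\sqrt{d+1}]^{d-1}$ to land exactly on $3^{-d/2}2^{(2d-1)\tau}(d+1)^{2d-1/2}$. The only difference is cosmetic: the paper quotes the inequality $1/m\leq\sum_i|r_i|$ directly from \cite{BPR}, whereas you unpack it by factoring $R=Z^{\nu}\tilde R$ and using the integrality $|r_{\nu}|\geq 1$---the same argument spelled out.
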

\begin{proof}
If $m$ is attained at $0$ or $1$, then the result is obvious. If not, $m$ is a root of the resultant $R(Z)$. Since $R$ has at least one non-zero root ($R(m) = 0$), Cauchy's bound (see \cite{BPR}) implies
\begin{align*}
\dfrac{1}{m} & \leq {\sum\limits_{i=0}^{d-1} \left|r_i\right|}\\
& <{\sum\limits_{i=0}^{d-1}3^{-d/2}\left[2^{\tau}\sqrt{(d+1)^{3}}\right]^{d}{d-1 \choose i} \left[2^{\tau}\sqrt{d+1}-1\right]^{d-1-i}}\\
& \leq 3^{-d/2} \left[2^{\tau}\sqrt{(d+1)^{3}}\right]^{d} \left[2^{\tau}\sqrt{d+1}\right]^{d-1},
\end{align*}
from which the result follows easily.
\end{proof}
\begin{rem}
Our bound is slightly better than a recent one presented in \cite{BCR}, which was already almost sharp. Indeed, following \cite{BCR}, consider the polynomial $P_{k} = X^{d}+\left( 2^{k}X-1\right)^{2}$. Here, $\tau = 2k$ and the minimum $m_{k}$ of $P_{k}$ satifies
\[m_{k}\leq P_{k}\left( 2^{-k}\right) = 2^{-d\tau/2},\]
and thus decreases exponentially with $d$ and $\tau$.
\end{rem}
\section[Multivariate case]{Bound on the minimum of multivariate positive polynomial}
We now consider the multivariate case.
\subsection{Notation and problem statement}
The following notation will be useful:
\begin{Not}
We write $\bit(n)$ for the bitsize of an integer $n\in \N$.
\end{Not}
Let $P \in \Z \left[X_1,\dots,X_k\right]$ be a polynomial of degree $d$, and $\tau$ a bound on the bitsize of its coefficients. Moreover, assume that 
\[m =\mathop {\min }\limits_\Delta  P > 0.\]
In order to find an explicit lower bound $0< m_{k,d,\tau}<m$, we generalize the proof of the univariate case. We first show that, at the cost of slightly increasing the bitsize of the coefficients,  we can assume that the minimum is attained in the interior of the simplex. Obviously, there exists a face $\sigma$ of $\Delta$, of dimension $0 \leq s \leq k$, such that the minimum $m$ is attained at a point of the interior of $\sigma$ (with its induced topology). In the following we consider such a face $\sigma$, of minimal dimension $s$.
\begin{rem}
If $\sigma$ is a vertex of $\Delta$, then obviously $m \geq 1$. We now assume that $s \geq 1$.
\end{rem}
Denote by
\[
\left\{ \begin{array}{l}
 V_0 = 0 \\ 
 V_i = e_i \quad (1\leq i \leq k)  
 \end{array} \right.
\]
the vertices of $\Delta$, and
\[
\left\{ \begin{array}{l}
 \lambda_0 = 1 - \sum X_i \\ 
 \lambda_i = X_i \quad (1\leq i \leq k)
 \end{array} \right.
\]
the associated barycentric coordinates.
\\
\\There exists a subset $I = \{i_0,\dots,i_{s} \}$ of $\{0,\dots,k\}$ such that the vertices of $\sigma$ are the vertices $\left(V_i\right)_{i \in I}$. Let $J=\{0,\dots,k\} \backslash I$. The face $\sigma$ is characterized by:
\[\sigma = \{ x \in \Delta \left|\forall j \in J, \  \lambda_j(x) = 0\right. \}.\]
We make the following substitutions in $P$: 
\\
\\ $\bullet$ If $j \in J$ and $j > 0$, replace the variable $X_j$ by $0$
\\ $\bullet$ If $j \in J$ and $j = 0$, replace the variable $X_{i_0}$ by $1 - \sum\limits_{\ell = 1}^{s} {X_{i_\ell}}$
\\
\\ We then obtain a polynomial $P_\sigma \in \Z \left[X_{i_1},\dots,X_{i_{s}}\right]$ satisfying :
\[
\mathop {\min }\limits_{\Delta} P = \mathop {\min }\limits_{\mathop \sigma \limits^ \circ} P_\sigma.
\]
Renaming the variables $X_{i_\ell}$ into $Y_\ell$, we obtain that $P_\sigma \in \Z \left[Y_1,\dots,Y_s\right]$ is a polynomial verifying:
\begin{lemma}
The degree of $P_\sigma$ is bounded by $d$.
\\Moreover, the bitsize of its coefficients is bounded by $\tau_\sigma$, where
\begin{equation}
\label{tausigma}
\tau_\sigma = \tau + 1 + d \bit (k).
\end{equation}
\end{lemma}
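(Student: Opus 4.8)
The plan is to separate according to whether the index $0$ lies in $I$ or in $J$, and in the non-trivial case to reduce everything to a single linear substitution and estimate its effect on the coefficients.

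\emph{Case $0 \in I$.} Then every prescribed substitution replaces a variable $X_j$ ($j \in J$) by $0$, so $P_\sigma$ is obtained from $P$ by deleting the monomials divisible by one of these variables and renaming the survivors; hence $\deg P_\sigma \le d$ and its coefficients have bitsize $\le \tau \le \tau_\sigma$. So from now on assume $0 \in J$. Carrying out first the substitutions $X_j \mapsto 0$ for $j \in J \setminus \{0\}$ — which again only deletes monomials — we obtain a polynomial $Q \in \Z[X_{i_0}, \dots, X_{i_s}]$ (all the $X_{i_\ell}$ are genuine variables since $0 \notin I$) of degree $\le d$ with coefficients of bitsize $\le \tau$, and, writing $Z = X_{i_0}$ and renaming $X_{i_\ell} = Y_\ell$ for $\ell \ge 1$, we have $P_\sigma = Q(1 - Y_1 - \cdots - Y_s,\, Y_1, \dots, Y_s)$.

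For the degree, substituting for $Z$ the linear form $1 - Y_1 - \cdots - Y_s$ sends a monomial $Z^a Y^b$ of total degree $a + |b| \le d$ to $(1 - Y_1 - \cdots - Y_s)^a Y^b$, still of degree $\le d$, so $\deg P_\sigma \le d$. For the coefficients, decompose $Q = \sum_{a=0}^d Z^a Q_a(Y)$ with each $Q_a \in \Z[Y_1,\dots,Y_s]$ having coefficients of absolute value $< 2^\tau$, so that $P_\sigma = \sum_{a=0}^d (1 - Y_1 - \cdots - Y_s)^a Q_a(Y)$. Using that the largest absolute value of a coefficient of a product is at most the $\ell_1$-norm (sum of absolute values of the coefficients) of one factor times the largest absolute value of a coefficient of the other, together with the triangle inequality over $a$, I would bound the largest coefficient of $P_\sigma$ by $\sum_{a=0}^d \|(1 - Y_1 - \cdots - Y_s)^a\|_1 \cdot 2^\tau$; since $\|(1 - Y_1 - \cdots - Y_s)^a\|_1 = (s+1)^a$ (replace every sign by $+$ and set all $Y_\ell = 1$), this is $2^\tau \sum_{a=0}^d (s+1)^a = 2^\tau \frac{(s+1)^{d+1} - 1}{s} < 2^{\tau+1}(s+1)^d$, the last step because $\frac{s+1}{s} \le 2$ for $s \ge 1$.

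It then remains to observe the elementary inequality $k + 1 \le 2^{\bit(k)}$ (immediate from $\bit(k) = \lfloor \log_2 k \rfloor + 1$), which gives $(s+1)^d \le (k+1)^d \le 2^{d\,\bit(k)}$ and hence that every coefficient of $P_\sigma$ has absolute value $< 2^{\tau + 1 + d\,\bit(k)} = 2^{\tau_\sigma}$, i.e. bitsize $\le \tau_\sigma$, as claimed. The one point that needs care is the coefficient estimate: the crude bound obtained by adding up the $\ell_1$-norms of the images of all monomials of $Q$ would bring in a spurious factor equal to the number of monomials of $Q$, far larger than what the statement allows; grouping the terms by the power $Z^a$ and only then using $\|fg\|_\infty \le \|f\|_1\|g\|_\infty$ is what keeps the estimate within the budget $\tau + 1 + d\,\bit(k)$. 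Everything else is routine bookkeeping.
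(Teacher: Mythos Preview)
Your proof is correct and follows essentially the same approach as the paper: the same case distinction, the same reduction to the single linear substitution $Z \mapsto 1-\sum Y_\ell$, and the same key estimate $\sum_{a=0}^{d}(s+1)^a \le 2(s+1)^d$. The only cosmetic differences are that the paper computes the coefficient $b_\gamma$ explicitly via multinomial coefficients rather than invoking $\|fg\|_\infty \le \|f\|_1\|g\|_\infty$, and that in the last line the paper uses the slightly sharper $s+1 \le k$ (available since $0\in J$ forces $s\le k-1$) in place of your $s+1 \le k+1 \le 2^{\bit(k)}$; both reach the bound $2^{\tau_\sigma}$.
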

\begin{proof}
The degree of $P_\sigma$ is clearly at most $d$. We now show the result concerning the bitsizes of the coefficients.

The result is clear if $s= k$.

Assume that $1 \leq s \leq k-1$.
\\Since replacing $X_j$ by $0$ does not change the bound on the bitsize of the coefficients, only the replacement of $Y_{0}$ by $1 - \sum\limits_{i = 1 }^{s} {Y_i}$ has to be taken into account.
\\
\\If \[P = \sum\limits_{\substack{\alpha \in \N^k \\ |\alpha| \leq d}}a_\alpha X^\alpha,\]
then
\[P_\sigma = \sum\limits_{\substack{\gamma \in \N^s \\ |\gamma| \leq d}}b_\gamma Y^\gamma,\]
where
\[b_\gamma = \sum\limits_{\substack{\beta \in I_\gamma}} \pm { |\beta| \choose \beta} a_{\left(\gamma_1-\beta_1,\dots,\gamma_k-\beta_k\right)},\]
and \[I_\gamma = \{ \beta \in \N^{s+1} \Big\vert |\beta| \leq d  \text{ and } \forall i \in \{1,\dots,s\}, \  \beta_i \leq \gamma_i\}.\]
Hence, we have
\begin{eqnarray*}
|b_\gamma| &\leq& 2^\tau \sum\limits_{\substack{\beta \in I_\gamma}} { |\beta| \choose \beta} \\
& \leq &  2^\tau \sum\limits_{\substack{\beta \in \N^{s+1}\\|\beta| \leq d}}{ |\beta| \choose \beta} \\
& \leq & 2^\tau \sum\limits_{p=0}^{d}\sum\limits_{\substack{\beta \in \N^{s+1}\\|\beta| =p}} { |\beta| \choose \beta} \\
& \leq & 2^\tau \sum\limits_{p=0}^{d} (s+1)^p \\
& \leq & 2^\tau \dfrac{(s+1)^{d+1}}{s} \\
& \leq & 2^\tau\times 2(s+1)^d \\
& \leq & 2^{\tau+1}k^d,
\end{eqnarray*}
and the conclusion follows.
\end{proof}
%Note that $\sigma$ has been identified with the $s-$dimensional simplex.
Since
\[
m=\mathop {\min }\limits_{\Delta} P = \mathop {\min }\limits_{\mathop \sigma \limits^ \circ} P_\sigma,
\]
$P_\sigma \in \Z \left[Y_1,\dots,Y_s\right]$ achieves its minimum in the interior of $\sigma$. Consequently, $m$ is attained at a critical point of $P_\sigma$, $i.e.$ a point $x\in \R^s$ such that the gradient of $P_\sigma$ is zero at $x$. We are thus interested in computing the values of $P$ at the zeros of its gradient. We aim at giving a univariate reformulation of this problem, enabling us to use resultant methods. The following section introduces the necessary material.
\subsection{Rational univariate representation}
We first introduce the notion of Thom encoding:
\begin{defi}
Let $P \in \R[X]$ be a real univariate polynomial, $x\in\R$ a real number and $\sigma \in \{0,1,-1\}^{\Der(P)}$ a sign condition on the set $\Der(P) = \left\{P,P\pr,\dots,P^{(\deg P)}\right\}$ of the derivatives of $P$.
\\
\\ The sign condition $\sigma$ is a Thom encoding of $x$ if $\sigma(P) = 0$ and 
\[\forall i,\quad \Sign \left(P^{(i)}(x)\right) = \sigma \left(P^{(i)}\right).\]
\end{defi}
We can now define a rational univariate representation as follows:
\begin{defi}
An $s-$rational univariate representation $u$ is an $(s+3)-$tuple of the form
\[u= \left(F(T),g_0(T),\dots,g_s(T),\pi \right)\]
such that:
\begin{enumerate}
\item $F,g_0,\dots,g_s \in \R \left[T\right]$,
\item $F$ and $g_0$ are coprime,
\item $\pi$ is a Thom encoding of a root $t_\pi \in \R$ of $F$.
\end{enumerate}
\end{defi}
\begin{rem}
If $t \in \R$ is a root of $F$, then $g_0(t) \neq 0$.
\end{rem}
We now define the point associated to the rational univariate representation:
\begin{defi}
The point associated to $u$ is defined by
\[x_u = \left(\dfrac{g_1(t_\pi)}{g_0(t_\pi)},\dots,\dfrac{g_s(t_\pi)}{g_0(t_\pi)}\right).\]
\end{defi}
Hence, a rational univariate representation gives rise to a point whose coordinates are rational fractions evaluated at a root of $F$.
\\
\\
Let $Q \in \R^s$ be a nonnegative polynomial over $\R^s$, and
\[\mathcal{Z}(Q) =\big\{ x \in \R^s \big\vert Q(x) = 0 \big\}\]
be the set of real zeros of $Q$. We are interested in finding a point in each bounded connected component of $\mathcal{Z}(Q)$. This can be done by applying Algorithm 12.15 of \cite{BPR}, which we recall here for convenience.
\begin{algo}[Bounded Algebraic Sampling]
\begin{algorithmic}
\label{bounded}
\STATE
\STATE
\REQUIRE A polynomial $Q \in \Z \left[X_1,\dots X_s\right]$, of degree bounded by $d_Q$, nonnegative over $\R^s$.
\ENSURE A set $\mathcal{U}$ of rational univariate representations of the form
\[\left(F(T),g_0(T),\dots,g_s(T),\pi \right),\]
where the polynomials $F,g_0,\dots,g_s$ have integer coefficients, and such that the associated points meet every bounded connected component of $\mathcal{Z}(Q)$.
\STATE 
\end{algorithmic}
\end{algo}
%MF ads
We indicate the main ideas behind the algorithm, referring the reader to \cite{BPR} for details:
\begin{itemize}
\item replace $Q$ by a deformation ${\rm Def}(Q,d_Q,\zeta)$ of degree bounded by $d_Q+2$, where $\zeta$ is an infinitesimal,
\item consider the critical points ${\rm Cr}({\rm Def}(Q,d_Q,\zeta))$ of ${\rm Def}(Q,d_Q,\zeta)$ in the $X_1$-direction,
\item due to the properties of ${\rm Def}(Q,d_Q,\zeta)$,
\begin{itemize}
\item ${\rm Cr}({\rm Def}(Q,d_Q,\zeta))$ has a finite number of points,
\item the quotient ring defined by the equations of ${\rm Cr}({\rm Def}(Q,d_Q,\zeta))$ is a  vector space of dimension at most 
\[(d_Q+2)(d_Q+1)^{k-1},\]
\item its multiplication table can be easily computed,
\end{itemize}
\item find rational univariate representations of the points of ${\rm Cr}({\rm Def}(Q,d_Q,\zeta))$,
\item take their limits with respect to $\zeta$, which define a finite set of points intersecting all the bounded connected components of $\mathcal{Z}(Q)$.
\end{itemize}
The complexity analysis in \cite{BPR} shows that, if $d_Q$ is a bound on the degree of $Q$ and $\tau_Q$ a bound on the bitsize of its coefficients, then:
\begin{enumerate}
\item The degrees of the polynomials $F,g_0,\dots,g_k$ are bounded by \[(d_Q+2)(d_Q+1)^{k-1}\]
\item The bitsize of their coefficients is bounded by
\[(d_Q+2)(d_Q+1)^{k-1}(kd_Q+2)  \left(\tau \pr +2 \bit(kd_{Q}+3) + 3\mu + \bit(k)\right),\]
where
\begin{eqnarray*}
\tau {\pr } &=& \sup \left[\tau_Q,\bit(2k)\right] + 2 \bit \left[k (d_Q+2)\right] +1 \\
\mu &=& \bit \left[(d_Q+2)(d_Q+1)^{k-1}\right].
\end{eqnarray*}
\end{enumerate}
\subsection{The bound}
Recall that $P_\sigma \in \Z \left[Y_1,\dots,Y_s\right]$ achieves its minimum in the interior of $\sigma$. Consequently, this minimum is attained at a critical point of $P_\sigma$, $i.e.$ a point $x\in \R^s$ at which the gradient of $P_\sigma$ is zero. Consider the set of critical points
\[\mathcal{Z} = \left\{x \in \R^s \Big\vert {\dfrac{\partial P_\sigma}{\partial Y_1}} (x)= \dots = {\dfrac{\partial P_\sigma}{\partial Y_s}} (x) = 0\right\}.\]
Note that $P_{\sigma}$ is constant on each connected component of $\mathcal{Z}$. So, we aim at computing a set $\mathcal{U}$ of rational univariate representations $u$ whose associated points $x_{u}$ meet every connected component of $\mathcal{Z}$, together with the values  $P_{\sigma}(x_{u})$.
\begin{rem}
When $\mathcal{Z}$ has a finite number of points, then Gr\"obner basis techniques can be used to obtain rational univariate representations of these points (see \cite{R}). The method we present hereafter, based on Algorithm \ref{bounded}, computes a point in every connected component of $\mathcal{Z}$ even if $\mathcal{Z}$ is infinite. Moreover Algorithm \ref{bounded} makes it possible to control the degree and the bitsize of the coefficients of the output, in contrast with Gr\"obner basis methods.
\end{rem}
It is easy to see that if $C$ is a connected component of $\mathcal{Z}$ containing a minimizer of $P_\sigma$ in $\sigma$, then $C \subset {\mathop \sigma \limits^ \circ}$ by minimality of the dimension $s$ of $\sigma$. In particular, $C$ is bounded. Algorithm \ref{bounded} then gives a set of rational univariate representations of the form 
\[u = \left(F(T), g_0(T),g_1(T),\dots,g_s(T),\pi\right),\]
whose associated points meet every bounded connected component of $\mathcal{Z}$. In particular, they meet every connected component of $\mathcal{Z}$ containing a minimizer of $P_\sigma$ in $\sigma$.
%MF suppresses this remark here
%\begin{rem}
%The other advantage of algorithm \ref{bounded} compared to Gröbner basis methods is the control on the degree and the %bitsize of the coefficients of the output. Indeed, we have:
%\end{rem}
\begin{lemma}
\label{rur}
The degree of the polynomials $F,g_0,\dots,g_s$ is bounded by $d_u$, where
\[d_u = 2d(2d-1)^{k-1}.\]
Moreover,  the bitsize of their coefficients is bounded by
\[\tau_u = d_{u}(2kd-2k+2) \left[\tau \pr + 2\bit(2kd-2k+3)+3\bit(d_{u}) + \bit(k)\right],\]
where
\begin{eqnarray*}
\tau \pr &=& 2\tau +(2d+2)\bit(k) + (k+3)\bit(d) + 5.
%\mu &=& \bit \left[2d(2d-1)^{k-1}\right].
\end{eqnarray*}
\end{lemma}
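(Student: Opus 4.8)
The plan is to instantiate Algorithm~\ref{bounded} with the input polynomial $Q = \sum_{\ell=1}^{s}\left(\dfrac{\partial P_\sigma}{\partial Y_\ell}\right)^2$, which is nonnegative over $\R^s$ and whose real zero set is exactly the critical locus $\mathcal{Z}$ above. First I would bound the degree $d_Q$ of $Q$: since each partial derivative $\partial P_\sigma/\partial Y_\ell$ has degree at most $d-1$ (using the degree bound on $P_\sigma$ from the earlier lemma), squaring and summing gives $d_Q \le 2(d-1) = 2d-2$. Substituting $d_Q \le 2d-2$ into the degree estimate $(d_Q+2)(d_Q+1)^{k-1}$ from the complexity analysis yields $(2d)(2d-1)^{k-1} = 2d(2d-1)^{k-1} = d_u$, which establishes the degree bound; this part is routine.

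Next I would bound $\tau_Q$, the bitsize of the coefficients of $Q$. Differentiating $P_\sigma$ multiplies coefficients by integers at most $d$, costing at most $\bit(d)$ extra bits, so each $\partial P_\sigma/\partial Y_\ell$ has coefficients of bitsize at most $\tau_\sigma + \bit(d)$ with $\tau_\sigma = \tau + 1 + d\,\bit(k)$ from~\eqref{tausigma}. Multiplying two such polynomials and summing $s \le k$ of them: a product of two polynomials in $s$ variables of degree $\le d-1$ has at most $\binom{d-1+s}{s} \le (2d)^k$-ish many monomial contributions to each coefficient (I would use a clean crude bound here, e.g.\ $\binom{2d-2+k}{k}$ or simply $d^k$), each of size $\le 2(\tau_\sigma+\bit(d))$ bits, and then the outer sum over $\ell$ adds $\bit(s)\le\bit(k)$ more. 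Collecting these contributions and bounding $\bit$ of binomial coefficients by linear expressions in $\bit(d)$ and $\bit(k)$, I expect to land on $\tau_Q \le 2\tau + (2d+2)\bit(k) + (k+2)\bit(d) + C$ for a small explicit constant $C$; one then checks that feeding this into the formula for $\tau\pr$ in the complexity analysis, namely $\tau\pr = \sup[\tau_Q,\bit(2k)] + 2\bit[k(d_Q+2)] + 1$ with $d_Q+2 \le 2d$, absorbs into the claimed $\tau\pr = 2\tau + (2d+2)\bit(k) + (k+3)\bit(d) + 5$.

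Finally, with $d_Q \le 2d-2$ the quantity $kd_Q + 2 \le 2kd - 2k + 2$ and $kd_Q + 3 \le 2kd - 2k + 3$, and $\mu = \bit[(d_Q+2)(d_Q+1)^{k-1}] \le \bit(d_u)$, so substituting directly into the bitsize bound
\[
(d_Q+2)(d_Q+1)^{k-1}(kd_Q+2)\left(\tau\pr + 2\bit(kd_Q+3) + 3\mu + \bit(k)\right)
\]
from the complexity analysis gives exactly $\tau_u = d_u(2kd - 2k + 2)\left[\tau\pr + 2\bit(2kd-2k+3) + 3\bit(d_u) + \bit(k)\right]$, as claimed.

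The main obstacle I anticipate is the bookkeeping for $\tau_Q$: getting the coefficient-size growth under differentiation, multiplication, and summation to fit \emph{exactly} inside the stated $\tau\pr$ (rather than merely $O(\cdot)$ of it) requires choosing the crude combinatorial bounds carefully and tracking the small additive constants so that the "$+5$" and the "$(k+3)\bit(d)$" — as opposed to "$(k+2)$" — are genuinely respected after the $+2\bit[k(d_Q+2)]+1$ from the algorithm's formula is added in. Everything else is a direct substitution of $d_Q \le 2d-2$ into the quoted estimates.
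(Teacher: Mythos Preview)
Your proposal is correct and follows essentially the same approach as the paper: define $Q=\sum_\ell (\partial P_\sigma/\partial Y_\ell)^2$, observe $d_Q\le 2d-2$, bound $\tau_Q$ by tracking the coefficient growth through differentiation, squaring, and summation, and then substitute $d_Q$ and $\tau_Q$ into the quoted complexity estimates for Algorithm~\ref{bounded}. The paper's computation of $\tau_Q$ is carried out slightly more directly---writing out the coefficient of the square explicitly and bounding it by $k\,2^{2\tau_\sigma}d^{k+2}$, which yields $\tau_Q = 2\tau + (2d+1)\bit(k) + (k+2)\bit(d) + 2$---so the constants land exactly where needed without the looseness you anticipate, but the idea is identical.
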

\begin{proof}
Let $Q$ denote the polynomial \[Q = \sum\limits_{i=1}^{s} {\left(\dfrac{\partial P_\sigma}{\partial Y_i}\right)^2}.\]
Clearly, its degree is bounded by $d_Q = 2d-2$. Moreover, we can bound the bitsize of its coefficients as follows.
\\ If \[P_\sigma = \sum\limits_{\substack{\gamma \in \N^s \\ |\gamma| \leq d}}b_\gamma Y^\gamma,\]
then
\[{\left(\dfrac{\partial P_\sigma}{\partial Y_i}\right)^2} = \sum\limits_{\substack{\gamma \in \N^s \\ |\gamma| \leq d}} c_\gamma Y^{\gamma-2e_i},\]
where
\[c_\gamma = \sum\limits_{\substack{\alpha \in \N^s \\ \alpha \leq \gamma}}\alpha_i (\gamma_i-\alpha_i)a_\alpha a_{\gamma-\alpha}.\]
Write $ Q =\sum\limits_{\substack{\delta \in \N^s \\ |\delta| \leq d-2}}d_\delta Y^\delta$. Since $Q = \sum\limits_{i=1}^{s} {\left(\dfrac{\partial P_\sigma}{\partial Y_i}\right)^2},$ its coefficients are bounded as follows:
\begin{eqnarray*}
\left|d_\delta \right| & \leq & s \sum\limits_{\substack{\alpha \in \N^s \\ \alpha \leq \gamma}}\alpha_i(\gamma_i-\alpha_i)a_\alpha a_{\gamma-\alpha} \\
& \leq & sd^22^{2\tau_\sigma}d^k \\
& \leq & k2^{2\tau_\sigma}d^{k+2}.
\end{eqnarray*}
Hence, the bitsize of the coefficients of $Q$ is bounded by $\tau_Q$, where
\[\tau_Q = 2\tau_\sigma + (k+2)\bit(d) + \bit(k) = 2\tau +(2d+1)\bit(k) + (k+2)\bit(d) + 2,\]
where the last equality follows from equation \eqref{tausigma}.
The result now follows from the complexity analysis of Algorithm \ref{bounded}.
\end{proof}
Let $P_u (T) =  g_0(T)^d P_\sigma \left(\dfrac{g_1(T)}{g_0(T)},\dots,\dfrac{g_s(T)}{g_0(T)}\right)$. We have:
\begin{lemma}
\label{Pu}
The degree of $P_u$ is bounded by $d_{P,u} = d_u d.$
\\The bitsize of its coefficients is bounded by $\tau_{P,u}$, where
\[\tau_{P,u} =d \left[\tau_u + \bit(d_u+1)\right] + \tau + d\bit(k) + d + k + 1.\]
\end{lemma}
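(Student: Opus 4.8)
The plan is to make the definition of $P_u$ explicit by expanding $P_\sigma$ over its monomials, and then to bound the resulting univariate polynomial with the same elementary height estimates for products of univariate polynomials that are implicit in the univariate resultant computation. First I would write $P_\sigma=\sum_{\gamma}b_\gamma Y^\gamma$, the sum ranging over $\gamma\in\N^s$ with $|\gamma|\le d$. Since $\deg P_\sigma\le d$ we have $d-|\gamma|\ge 0$, so multiplying by $g_0^d$ clears all denominators and yields the polynomial identity
\[
P_u(T)=\sum_{\gamma}b_\gamma\,g_0(T)^{d-|\gamma|}\prod_{i=1}^{s}g_i(T)^{\gamma_i}\ \in\ \Z[T].
\]

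The degree bound is then immediate: each summand is a product of exactly $d$ of the polynomials $g_0,\dots,g_s$ (counted with multiplicity), each of degree at most $d_u$ by Lemma~\ref{rur}, hence of degree at most $d\,d_u=d_{P,u}$, and so is $P_u$. For the coefficients, write $h(A)$ for the largest absolute value of a coefficient of $A$. Starting from $h(AB)\le(\min(\deg A,\deg B)+1)\,h(A)\,h(B)$ and iterating over the $d$ factors of a fixed summand, one obtains $h\bigl(g_0^{d-|\gamma|}\prod_i g_i^{\gamma_i}\bigr)\le(d_u+1)^{d-1}(2^{\tau_u}-1)^d$, since each $g_i$ has coefficients bounded by $2^{\tau_u}-1$ (Lemma~\ref{rur}). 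Multiplying by $b_\gamma$, whose absolute value is at most $2^{\tau_\sigma}-1$ with $\tau_\sigma=\tau+1+d\,\bit(k)$, and summing over the $\binom{s+d}{s}<2^{s+d}\le 2^{k+d}$ admissible indices $\gamma$ (here I use $s\le k$), gives
\[
h(P_u)\le\binom{s+d}{s}(2^{\tau_\sigma}-1)(d_u+1)^{d-1}(2^{\tau_u}-1)^d.
\]

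Finally I would take bitsizes of this bound, using $\bit(ab)\le\bit(a)+\bit(b)$, $\bit(2^n-1)=n$, $\bit\bigl((2^{\tau_u}-1)^d\bigr)\le d\tau_u$, $\bit\bigl((d_u+1)^{d-1}\bigr)\le(d-1)\bit(d_u+1)$ and $\bit\binom{s+d}{s}\le k+d$, to conclude that the coefficients of $P_u$ have bitsize at most $(k+d)+\tau_\sigma+(d-1)\bit(d_u+1)+d\tau_u$; substituting $\tau_\sigma=\tau+1+d\,\bit(k)$ and bounding $(d-1)\bit(d_u+1)\le d\,\bit(d_u+1)$ gives exactly $\tau_{P,u}$. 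The only place needing care is this last computation: one should use the sharp forms of the bitsize inequalities (for instance $\bit(2^n-1)=n$ rather than $n+1$, and the strict bound $\binom{s+d}{s}<2^{s+d}$) so that all the accumulated additive constants collapse into the single ``$+1$'' that appears in $\tau_{P,u}$. The rest is routine.
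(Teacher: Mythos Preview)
Your argument is correct and is essentially the explicit unpacking of the paper's own proof, which simply says the degree bound is clear from Lemma~\ref{rur} and the bitsize bound follows ``by substitution, using Proposition~8.11 of \cite{BPR}.'' Your monomial expansion of $P_u$, the product bound $h(AB)\le(\min(\deg A,\deg B)+1)h(A)h(B)$ iterated over the $d$ factors, and the final count of monomials are exactly the content of that cited substitution estimate, and your arithmetic with $\tau_\sigma=\tau+1+d\,\bit(k)$ lands precisely on $\tau_{P,u}$.
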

\begin{proof}
The result about the degree is clear from the previous lemma.
\\The bound on the bitsize of the coefficients is obtained by substitution, using Proposition $8.11$ of \cite{BPR}.
\end{proof}
The minimum $m$ of $P_\sigma$ over $\sigma$ is attained at a point $x\in \sigma$ contained in a connected component of $\mathcal{Z}$ included in the ball $B(0,1)$. Since $P_\sigma$ is constant on such a component, $m$ is also attained at some point $x_u$ associated to an already computed rational univariate representation $ u = \left(F(T), g_0(T),g_1(T),\dots,g_s(T),\pi\right)$.

Since $t_\pi$ is a root of $F$, the minimum $m = P_\sigma \left(x_u\right)$ is a root of the resultant 
\[R(Z) = \Res_T \left(P_u(T)-g_{0}(T)^{d}Z,F(T)\right).\]
\begin{exe}
We consider here the following easy example (Berg polynomial, see Example $37$ in \cite{Sc}):
\[B:=x^2y^{2}(x^{2}+y^{2}-1)+1.\]
It is easy to show that $B$ is positive on $\Delta$. We now compute its minimum.
\begin{itemize}
\item On the three vertices of $\Delta$, we have $B =1$.
\item On the faces $\{x=0\}$ and $\{y=0\}$, we have $B=1$.
\item Consider the face $\{x+y=1\}$. Replacing $x$ by $1-y$ leads to consider the (univariate) polynomial
\[ B_{\{x+y=1\}} = 2y^{6}-6y^{5}+6y^{4}-2y^{3}+1.
\]
Since $B'_{\{x+y=1\}} = 6y^{2}(y-1)^{2}(2y-1)$, the minimum of $B_{\{x+y=1\}}$ is $31/32$, attained at $y=1/2$.
\item We now compute the values of $B$ at its critical points contained in the interior of $\Delta$. It is easy to show that those points $(x,y)$ satisfy
\begin{align*}
2x^{2}+y^{2}&=1\\
x^{2}+2y^{2}&=1.
\end{align*}
Note that this easily implies that there is only a finite number of such critical points. A rational univariate representation of this set can then be computed (using for example Salsa software, see \cite{Sa}):
\begin{align*}
F&= (3T^{2}-1)(T^{2}-3)\\
g_{0} &= T(3T^{2}-5)\\
g_{1} &= T^{2}+1\\
g_{2} &= 2(T^{2}-1).
\end{align*}
The resultant $R(Z)$ is equal to
\begin{align*}
R(Z) &= \Res_{T}\left( g_{0}(T)^{6}B\left( \dfrac{g_{1}(T)}{g_{0}(T)},\dfrac{g_{2}(T)}{g_{0}(T)}\right) - Z g_{0}(T)^{6},F(T)\right)\\
&= 2^{48}3^{6}(27Z-26)^{4}.
\end{align*}
The only root $26/27< \min(1,31/32)$ is thus the minimum of $B$ over $\Delta$, corresponding to the root $\sqrt{3}$ of $F$, and giving the minimizer
\[\left( \dfrac{g_{1}(\sqrt{3})}{g_{0}(\sqrt{3})},\dfrac{g_{2}(\sqrt{3})}{g_{0}(\sqrt{3})}\right) =\left( \dfrac{1}{\sqrt{3}},\dfrac{1}{\sqrt{3}}\right).\]
\end{itemize}
\end{exe}
In order to obtain a lower bound on the minimum depending only on $k,d$ and $\tau$, one needs to bound the roots of $R(Z)$. This can be done by controlling the size of the coefficients of $R(Z)$ and then using Cauchy's bound. Write
\[F(T) = \sum\limits_{i=0}^{d_u} f_i T^i,\]
\[P_u (T) =\sum\limits_{i=0}^{d_{P,u}}  a_i T^i\]
and
\[g_{0}(T)^d = \sum\limits_{i=0}^{d_{u}d}  b_i T^i = \sum\limits_{i=0}^{d_{P,u}}  b_i T^i.\]
\begin{lemma}
The polynomial $g_{0}(T)^d$ has degree bounded by $d_{u}d=d_{P,u}$, and the bitsize of its coefficients is bounded by $d(\tau_{u}+\bit(d_{u}+1))\leq \tau_{P,u}$.
\end{lemma}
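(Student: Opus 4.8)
The plan is to observe that $g_0(T)^d$ is obtained by raising the polynomial $g_0(T)$ to the $d$-th power, so both assertions follow directly from Lemma \ref{rur}, which controls $g_0$, together with a standard estimate on the growth of degrees and coefficient bitsizes under taking powers. Concretely, if $g_0$ has degree bounded by $d_u$, then $g_0^d$ has degree bounded by $d_u d$, which equals $d_{P,u}$ by the definition in Lemma \ref{Pu}; this is immediate. For the coefficient bound, I would invoke the multiplicativity of bitsize under products: a product of $d$ polynomials, each of degree at most $d_u$ and with coefficients of bitsize at most $\tau_u$, has coefficients of bitsize at most $d\,\tau_u$ plus a correction term accounting for the number of monomials combined at each step.

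The precise correction term is what the stated bound $d(\tau_u + \bit(d_u+1))$ reflects. The cleanest way to see this is to argue by induction on the number of factors: if $A$ has degree $\leq \delta_A$ and coefficient bitsize $\leq \sigma_A$, and $B$ has degree $\leq \delta_B$ and coefficient bitsize $\leq \sigma_B$, then each coefficient of $AB$ is a sum of at most $\min(\delta_A,\delta_B)+1$ products of pairs of coefficients, hence has bitsize at most $\sigma_A + \sigma_B + \bit(\min(\delta_A,\delta_B)+1)$. Applying this $d-1$ times to the product $g_0 \cdots g_0$, at the $j$-th step one multiplies a polynomial of degree at most $j\,d_u$ by $g_0$ (degree at most $d_u$), so the relevant factor-count bound is $\bit(d_u+1)$ each time. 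Summing, the coefficient bitsize of $g_0^d$ is bounded by $d\,\tau_u + (d-1)\bit(d_u+1) \leq d(\tau_u + \bit(d_u+1))$, as claimed. Alternatively one can cite Proposition 8.11 of \cite{BPR} exactly as was done in the proof of Lemma \ref{Pu}, since that proposition handles the bitsize of coefficients under substitution and products in precisely this generality.

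Finally, to conclude $d(\tau_u + \bit(d_u+1)) \leq \tau_{P,u}$, I would simply compare with the formula for $\tau_{P,u}$ given in Lemma \ref{Pu}, namely
\[
\tau_{P,u} = d\left[\tau_u + \bit(d_u+1)\right] + \tau + d\,\bit(k) + d + k + 1,
\]
and note that all the additional summands $\tau + d\,\bit(k) + d + k + 1$ are nonnegative, so dropping them only decreases the bound. This makes the inequality transparent.

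I do not anticipate a genuine obstacle here: the statement is essentially a bookkeeping lemma isolating the contribution of $g_0^d$ from the proof of Lemma \ref{Pu}, recorded separately because $g_0^d$ appears on its own as the coefficient of $Z$ in the resultant $R(Z) = \Res_T(P_u(T) - g_0(T)^d Z, F(T))$. The only point requiring minor care is making sure the $\bit(d_u+1)$ term (rather than, say, $\bit(d_u d + 1)$) is the right per-step count; this is correct because at every multiplication step one factor is always the fixed polynomial $g_0$ of degree at most $d_u$, and the number of cross terms contributing to a given coefficient of a product is governed by the smaller of the two degrees.
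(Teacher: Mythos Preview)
Your proposal is correct and follows essentially the same approach as the paper: both argue by multiplying in $g_0$ one factor at a time, noting that each multiplication by $g_0$ raises the coefficient bitsize by at most $\tau_u + \bit(d_u+1)$ because each coefficient of the product is a sum of at most $d_u+1$ pairwise products. Your write-up is in fact slightly more careful than the paper's, since you make explicit the per-step count via $\min(\delta_A,\delta_B)+1$ and you spell out why $d(\tau_u+\bit(d_u+1))\le\tau_{P,u}$ holds by inspection of the formula for $\tau_{P,u}$.
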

\begin{proof}
The degree of $g_{0}(T)^d$ is clearly less than $d_{u}d=d_{P,u}$. We now show the bound on the bitsize of its coefficients.
Recall that the degree of $g_{0}$ is bounded by $d_{u}$and that the bitsize of its coefficients is less than$\tau_{u}$. When multiplying a univariate polynomial $f$ by $g_{0}$, the increase in the bitsize of the coefficients is at most $\tau_{u} + \bit(d_{u}+1)$. Indeed, the coefficients of $fg_{0}$ are sums of at most $(d_{u}+1)$ products of a coefficient of $f$ by a coefficient of $g_{0}$. The conclusion follows easily.
\end{proof}
The resultant $R(Z)$ is the determinant of the matrix $\Syl(Z)$, where $\Syl(Z)$ is the following Sylvester matrix:
\[
{\begin{pmatrix}
   {a_{d_{P,u}}-b_{d_{P,u}}Z } &  \cdots  &  \cdots  &  \cdots  & {\cdots} & {a_0  - b_{0}Z} & 0 &  \cdots  & 0  \\
   0 &  \ddots  & {} & {} & {} & {} &  \ddots  &  \ddots  &  \vdots   \\
    \vdots  &  \ddots  &  \ddots  & {} & {} & {} & {} &  \ddots  & 0  \\
   0 &  \cdots  & 0 & {a_{d_{P,u}}-b_{d_{P,u}}Z  } &  \cdots  &  \cdots  &  \cdots  & {\cdots} & {a_0  -b_{0}Z }  \\
   {f_{d_{u}} } &  \cdots  &  \cdots  &  \cdots  & {f_0 } & 0 &  \cdots  &  \cdots  & 0  \\
   0 &  \ddots  & {} & {} & {} &  \ddots  &  \ddots  & {} &  \vdots   \\
    \vdots  &  \ddots  &  \ddots  & {} & {} & {} &  \ddots  &  \ddots  &  \vdots   \\
    \vdots  & {} &  \ddots  &  \ddots  & {} & {} & {} &  \ddots  & 0  \\
   0 &  \cdots  &  \cdots  & 0 & {f_{d_{u}} } &  \cdots  &  \cdots  &  \cdots  & {f_0 } 
\end{pmatrix}} \begin{array}{*{20}c}
   {\left. \begin{array}{l}
\vspace{1.5 cm}
 \end{array}  \!\!\!\!\!\!\!\!\!\!\!\!\!\!\!\right\}d_u}  \\
   {\left. \begin{array}{l}
  \vspace{2.1 cm}
 \end{array} \!\!\!\!\!\!\!\!\!\!\! \right\}d_{P,u}}  \\
\end{array}
\]
$R(Z) = \sum\limits_{i=0}^{d_u} r_i Z^i$ is a polynomial of degree $\leq d_u$ in $Z$, whose coefficients are controlled in the following fashion:
\begin{lemma}
For all $i \in \{0,\dots,d_u\}$,
\[\left|r_i\right| < { d_u \choose i} \left[2^{{\tau_{P,u}}}\sqrt{d_{P,u}+1}\right]^{d_u} \left[2^{{\tau_u}}\sqrt{d_u+1}\right]^{d_{P,u}}.\]
\end{lemma}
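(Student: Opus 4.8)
The plan is to follow, \emph{mutatis mutandis}, the proof of Lemma~\ref{r_uni}: expand the determinant of the displayed Sylvester matrix $\Syl(Z)$ by multilinearity in the rows that carry the variable $Z$, and estimate each resulting determinant by Hadamard's inequality.

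First I would set up notation for the rows of $\Syl(Z)$. This matrix is square of size $d_{P,u}+d_u$: its first $d_u$ rows are the shifted coefficient vectors of $P_u(T)-g_0(T)^dZ$, and its last $d_{P,u}$ rows are the shifted coefficient vectors of $F(T)$. Write the $\ell$-th of the first $d_u$ rows as $\alpha_\ell - Z\beta_\ell$, where $\alpha_\ell$ is built from the coefficients $a_j$ of $P_u$ and $\beta_\ell$ from the coefficients $b_j$ of $g_0(T)^d$, and denote by $\gamma_1,\dots,\gamma_{d_{P,u}}$ the rows built from the coefficients $f_j$ of $F$. By Lemma~\ref{Pu} each entry of an $\alpha_\ell$ has absolute value $<2^{\tau_{P,u}}$; by the preceding lemma the coefficients of $g_0^d$ have bitsize $\le d(\tau_u+\bit(d_u+1))\le\tau_{P,u}$, so each entry of a $\beta_\ell$ also has absolute value $<2^{\tau_{P,u}}$; and by Lemma~\ref{rur} each entry of a $\gamma_j$ has absolute value $<2^{\tau_u}$. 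Since an $\alpha_\ell$ or a $\beta_\ell$ has at most $d_{P,u}+1$ nonzero entries and a $\gamma_j$ at most $d_u+1$, their Euclidean norms are $<2^{\tau_{P,u}}\sqrt{d_{P,u}+1}$ and $<2^{\tau_u}\sqrt{d_u+1}$ respectively.

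Next, expanding
\[
R(Z)=\det\!\left(\alpha_1-Z\beta_1,\dots,\alpha_{d_u}-Z\beta_{d_u},\gamma_1,\dots,\gamma_{d_{P,u}}\right)
\]
by multilinearity in its first $d_u$ rows, the coefficient $r_i$ of $Z^i$ is, up to sign, a sum of $\binom{d_u}{i}$ determinants, one for each size-$i$ subset $S\subseteq\{1,\dots,d_u\}$: in the determinant attached to $S$ one uses the row $\beta_\ell$ (with the scalar $-1$ pulled out front) for $\ell\in S$, the row $\alpha_\ell$ for $\ell\notin S$, and always the rows $\gamma_1,\dots,\gamma_{d_{P,u}}$. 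Each such matrix has $i$ rows of $\beta$-type, $d_u-i$ rows of $\alpha$-type and $d_{P,u}$ rows of $\gamma$-type, so Hadamard's bound (see \cite{BPR}) gives for its determinant
\[
\left(2^{\tau_{P,u}}\sqrt{d_{P,u}+1}\right)^{i}\left(2^{\tau_{P,u}}\sqrt{d_{P,u}+1}\right)^{d_u-i}\left(2^{\tau_u}\sqrt{d_u+1}\right)^{d_{P,u}}=\left(2^{\tau_{P,u}}\sqrt{d_{P,u}+1}\right)^{d_u}\left(2^{\tau_u}\sqrt{d_u+1}\right)^{d_{P,u}}.
\]
Summing over the $\binom{d_u}{i}$ subsets $S$ and using the triangle inequality yields the claimed bound on $|r_i|$, the strictness being inherited from the strict bounds on the matrix entries.

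I do not expect any genuine obstacle here: the argument is essentially a line-for-line transcription of Lemma~\ref{r_uni}, with $P-Z$ and $P'$ replaced by $P_u-g_0^dZ$ and $F$, and the degrees $d$, $d-1$ replaced by $d_{P,u}$, $d_u$. The only points needing attention are purely clerical: keeping track of which block of $\Syl(Z)$ has $d_u$ rows and which has $d_{P,u}$ rows, and invoking the preceding lemma so that the coefficient vectors $\alpha_\ell$ of $P_u$ and $\beta_\ell$ of $g_0^d$ can be controlled by the single quantity $\tau_{P,u}$.
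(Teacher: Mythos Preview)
Your proof is correct and follows essentially the same route as the paper: expand $\det\Syl(Z)$ by multilinearity in the first $d_u$ rows, then bound each of the $\binom{d_u}{i}$ resulting determinants via Hadamard's inequality using the row-norm estimates coming from $\tau_{P,u}$ and $\tau_u$. The only difference is cosmetic notation (your $\alpha_\ell,\beta_\ell,\gamma_j$ versus the paper's $A_j,B_j,C_j$) and that you spell out explicitly why the $\beta$-rows are controlled by $\tau_{P,u}$ via the preceding lemma on $g_0^d$.
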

\begin{proof}
We proceed as in the proof of lemma \ref{r_uni}.
\\ Let (with obvious notation)
\[\left(A_{1}+ZB_{1},\dots,A_{d_{u}}+ZB_{d_{u}},C_{1},\dots,C_{d_{P,u}}\right)\]
be the rows of the Sylvester matrix $\Syl(Z)$. Using the multilinearity of the determinant, we can write $R(Z) = \sum\limits_{i=1}^{d_{u}}r_{i}Z^{i}$, where, for all $i \in \{0,\dots,d_{u}\}$, $r_{i}$ is a sum of ${d_{u} \choose i}$ determinants of matrices built with:
\begin{itemize}
\item[-] $i$ rows among the $B_{j}$'s
\item[-] $d_{u}-i$ rows among the $A_{j}$'s
\item[-] the $d_{P,u}$ rows $C_{1},\dots,C_{d_{P,u}}$.
\end{itemize}
Hadamard's bound (see \cite{BPR}) implies that, for all $i$:
\begin{align*}
\left|r_i\right| & \leq{d_{u} \choose i} \sqrt{\left[(d_{P,u}+1)\left(2^{2\tau_{P,u}}-1\right)\right]^{d_{u}}}\sqrt{\left[(d_{u}+1)\left(2^{2\tau_{u}}-1\right)\right]^{d_{P,u}}}\\
& <  { d_u \choose i} \left[2^{{\tau_{P,u}}}\sqrt{d_{P,u}+1}\right]^{d_u} \left[2^{{\tau_u}}\sqrt{d_u+1}\right]^{d_{P,u}},
\end{align*}
as claimed.
\end{proof}
Since the minimum $m$ is a root of $R(Z)$, Cauchy's bound finally gives the estimate we were looking for.
\\

Let $\mathcal{U}$ be a set of rational univariate representations whose associated points meet every bounded connected component of
\[\mathcal{Z}_\sigma =  \left\{x \in \R^s  \Big\vert \dfrac{\partial P_\sigma}{\partial Y_1}(x) = \dots = \dfrac{\partial P_{\sigma}}{\partial Y_{s}}(x) = 0 \right\},\]
for each face $\sigma$ of $\Delta$.

Also, let $d_u$ (resp. $\tau_u$) be a bound on the degree (resp. the bitsize of the coefficients) of the polynomials occuring in the rational univariate representations of $\mathcal{U}$, and  $d_{P,u}$ (resp. $\tau_{P,u}$) be
%\[\tau_{P,u} =d \left[\tau_u + \bit(d_u+1)\right]  + \tau + d\bit(k)+ d + k + 1\]
a bound on the degree (resp. the bitsize of the coefficients) of the polynomial $P_{u}$.
Then:
\begin{theorem}
\label{th_minimum}

\[m > \dfrac{1}{\left[2^{{\tau_{P,u}+1}}\sqrt{d_{P,u}+1}\right]^{d_u}\left[2^{{\tau_u}}\sqrt{d_u+1}\right]^{d_{P,u}}}.\]
\end{theorem}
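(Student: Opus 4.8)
The plan is to mimic exactly the endgame of the univariate theorem. We already know that $m = P_\sigma(x_u)$ is a root of
\[
R(Z) = \Res_T\left(P_u(T) - g_0(T)^d Z,\, F(T)\right),
\]
and that $R$ has at least one nonzero root (namely $m > 0$). So the whole task reduces to applying Cauchy's bound to $R$ and then simplifying.

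First I would invoke Cauchy's bound (as stated in \cite{BPR}): for a polynomial $R(Z) = \sum_{i=0}^{d_u} r_i Z^i$ with a nonzero root, any nonzero root $z$ satisfies $1/|z| \leq \sum_{i=0}^{d_u} |r_i|$, hence in particular $1/m \leq \sum_{i=0}^{d_u}|r_i|$. Next I would feed in the coefficient bound from the preceding lemma,
\[
|r_i| < \binom{d_u}{i}\left[2^{\tau_{P,u}}\sqrt{d_{P,u}+1}\right]^{d_u}\left[2^{\tau_u}\sqrt{d_u+1}\right]^{d_{P,u}},
\]
and sum over $i$. Since the factor depending on $i$ is just $\binom{d_u}{i}$ and the rest is constant, the sum collapses via $\sum_{i=0}^{d_u}\binom{d_u}{i} = 2^{d_u}$, giving
\[
\frac{1}{m} < 2^{d_u}\left[2^{\tau_{P,u}}\sqrt{d_{P,u}+1}\right]^{d_u}\left[2^{\tau_u}\sqrt{d_u+1}\right]^{d_{P,u}} = \left[2^{\tau_{P,u}+1}\sqrt{d_{P,u}+1}\right]^{d_u}\left[2^{\tau_u}\sqrt{d_u+1}\right]^{d_{P,u}},
\]
where the last equality just absorbs the $2^{d_u}$ into the first bracket. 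Taking reciprocals yields the claimed inequality. I should also briefly dispose of the degenerate cases: if the minimum is attained at a vertex of $\Delta$ then $m \geq 1$ and the bound (being a reciprocal of something $\geq 1$) holds trivially; otherwise we are in the situation set up above where $m$ is a root of $R(Z)$.

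There is essentially no obstacle here — the genuine work (bounding $d_u$, $\tau_u$, $d_{P,u}$, $\tau_{P,u}$ and the coefficients $r_i$) has all been done in the lemmas. The only point requiring a word of care is making sure $R$ is not identically zero and has a nonzero root so that Cauchy's bound applies in the form $1/m \le \sum|r_i|$; this follows because $P_u(T) - g_0(T)^d Z$ and $F(T)$ are coprime for generic $Z$ (as $F$ and $g_0$ are coprime, so $g_0(t_\pi)^d \neq 0$), hence $R \not\equiv 0$, and $R(m) = 0$ with $m > 0$ supplies the nonzero root. One should double-check that the index on the sum in the displayed coefficient lemma runs from $0$ (the statement says $i \in \{0,\dots,d_u\}$, though one of the intermediate displays writes $\sum_{i=1}^{d_u}$ — this is a harmless typo and the constant term is covered).
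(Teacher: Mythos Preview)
Your proposal is correct and follows essentially the same route as the paper: apply Cauchy's bound to $R(Z)$, plug in the coefficient estimate $|r_i| < \binom{d_u}{i}[\cdots]$, collapse the binomial sum to $2^{d_u}$, and absorb that factor into the first bracket. Your extra remarks on the vertex case and on $R\not\equiv 0$ are welcome sanity checks but go slightly beyond what the paper writes out in this proof.
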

\begin{proof}
Since $R$ has at least one non-zero root ($R(m) = 0$), we can write 
\[R(Z) = \sum\limits_{i=q}^{p} r_i Z^i,\]
with $q<p\leq d_{u}$ and $r_q r_{p} \neq 0$.
\\
\\ Cauchy's bound then implies:
\begin{eqnarray*}
\dfrac{1}{m} & \leq & {\sum\limits_{i=0}^{d_u} \left|r_i\right|} \\
& < & {\sum\limits_{i=0}^{d_u} { d_u \choose i} \left[2^{{\tau_{P,u}}}\sqrt{d_{P,u}+1}\right]^{d_u} \left[2^{{\tau_u}}\sqrt{d_u+1}\right]^{d_{P,u}}}\\
& \leq & {2^{d_{u}}\left[2^{{\tau_{P,u}}}\sqrt{d_{P,u}+1}\right]^{d_u} \left[2^{{\tau_u}}\sqrt{d_u+1}\right]^{d_{P,u}}}\\
& \leq & {\left[2^{{\tau_{P,u}+1}}\sqrt{d_{P,u}+1}\right]^{d_u}\left[2^{{\tau_u}}\sqrt{d_u+1}\right]^{d_{P,u}}},
\end{eqnarray*}
as announced.
\end{proof}
Using Lemmas \ref{rur} and \ref{Pu}, Theorem \ref{th_minimum} immediately implies:
\begin{theorem}
Let $P \in \Z \left[X_{1},\dots,X_{k}\right]$ be a polynomial of degree $d$, $\tau$ a bound on the bitsize of its coefficients and $m = \mathop {\min }\limits_\Delta  P$ the minimum of $P$ over the simplex $\Delta$. Assume that $m > 0$. \\ Let 
\begin{align*}
D &= 2d(2d-1)^{k-1},\\
\rho &= D(2kd-2k+2) \left[\tau \pr + 2\bit(2kd-2k+3)+3\bit(D) + \bit(k)\right],\\
\rho \pr &= d \left[\rho + \bit(D+1)\right]  + \tau + d\bit(k)+ d + k + 1,
\end{align*}
where
\begin{eqnarray*}
\tau \pr &=&2\tau +(2d+2)\bit(k) + (k+3)\bit(d) + 5.
%\mu &=& \bit \left[2d(2d-1)^{k-1}\right].
\end{eqnarray*}
Then:
\[m > m_{k,d,\tau},\]
where
\[m_{k,d,\tau} = \dfrac{1}{\left[2^{{\rho \pr+1}}\sqrt{dD+1}\right]^{D}\left[2^{{\rho}}\sqrt{D+1}\right]^{dD}}.\]
\end{theorem}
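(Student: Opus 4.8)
The plan is simply to assemble the pieces already proved. The final theorem is a direct substitution of the explicit degree and bitsize bounds from Lemmas \ref{rur} and \ref{Pu} into the abstract estimate of Theorem \ref{th_minimum}. So the proof is essentially a matching of notation: set $D = d_u = 2d(2d-1)^{k-1}$, which is exactly the bound $d_u$ from Lemma \ref{rur}; set $\rho = \tau_u$, the bitsize bound from the same lemma (note that $\bit(D) = \bit(d_u)$ and $2kd-2k+2 = 2k(d-1)+2$ is the quantity $(2kd-2k+2)$ appearing there, with the inner $\tau'$ matching verbatim); and set $\rho' = \tau_{P,u}$, the bitsize bound on $P_u$ from Lemma \ref{Pu}, after observing that $d_{P,u} = d_u d = dD$ and $\bit(d_u+1) = \bit(D+1)$.

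First I would recall that, by the discussion preceding Theorem \ref{th_minimum}, the minimum $m$ of $P$ over $\Delta$ equals the minimum of $P_\sigma$ over the (minimal-dimension) face $\sigma$, and this is attained at a critical point lying in a bounded connected component of $\mathcal{Z}_\sigma$; hence $m = P_\sigma(x_u)$ for some rational univariate representation $u \in \mathcal{U}$ produced by Algorithm \ref{bounded}, and therefore $m$ is a root of $R(Z) = \Res_T(P_u(T) - g_0(T)^d Z, F(T))$. This is precisely the hypothesis needed for Theorem \ref{th_minimum}, which then gives
\[
m > \dfrac{1}{\left[2^{\tau_{P,u}+1}\sqrt{d_{P,u}+1}\right]^{d_u}\left[2^{\tau_u}\sqrt{d_u+1}\right]^{d_{P,u}}}.
\]
Then I would substitute $d_u = D$, $d_{P,u} = dD$, $\tau_u = \rho$, $\tau_{P,u} = \rho'$ to recover exactly the claimed expression $m_{k,d,\tau}$.

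The only genuine verification needed is that the formula for $\rho'$ stated in the theorem really is (an upper bound for) the quantity $\tau_{P,u}$ from Lemma \ref{Pu}, namely $\tau_{P,u} = d[\tau_u + \bit(d_u+1)] + \tau + d\bit(k) + d + k + 1$: this holds term by term since $\tau_u = \rho$ and $\bit(d_u+1) = \bit(D+1)$. Likewise one checks $\tau_u = D(2kd-2k+2)[\tau' + 2\bit(2kd-2k+3) + 3\bit(D) + \bit(k)]$ against Lemma \ref{rur}, where the inner $\tau' = 2\tau + (2d+2)\bit(k) + (k+3)\bit(d) + 5$ is copied verbatim from that lemma. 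Since both $m_{k,d,\tau}$ and the right-hand side of Theorem \ref{th_minimum} are decreasing in each of the parameters $d_u, d_{P,u}, \tau_u, \tau_{P,u}$, using upper bounds throughout only weakens the lower bound, so the inequality $m > m_{k,d,\tau}$ follows. There is no real obstacle here beyond bookkeeping; the substantive work was done in the earlier lemmas and in Theorem \ref{th_minimum}.
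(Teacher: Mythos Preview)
Your proposal is correct and matches the paper's own argument exactly: the paper simply states that ``Using Lemmas \ref{rur} and \ref{Pu}, Theorem \ref{th_minimum} immediately implies'' the result, and your write-up just makes that substitution explicit. The identifications $D = d_u$, $dD = d_{P,u}$, $\rho = \tau_u$, $\rho' = \tau_{P,u}$ are precisely the intended ones, and no further idea is needed.
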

\begin{rem}
We now give a more compact bound, derived from the last corollary. This will enable us to compare our results to those of de Loera and Santos (\cite{LS}) and Canny (\cite{C}).
\\The following estimates
%MF if I am not mistaken
\begin{align*}
D+1&\leq 2^{k}d^{k}\\
\frac{\rho}  {2^{k+1}d^{k+1}k} &\leq  2\tau +(2d+5)\bit(k)+(4k+5)\bit(d)+6k+9\\
\frac{\rho \pr} { 2^{k+1}d^{k+2}k}&\leq  2\tau +(2d+5)\bit(k)+(4k+5)\bit(d)+6k+9
\end{align*}
lead to the bound:
\begin{equation}
\label{bound}
\dfrac{1}{m_{k,d,\tau}} \leq \left(2^{\tau}\right)^{2^{k+3}d^{k+1}k}2^{2^{k+6}d^{k+2}k^{2}}k^{2^{k+5}d^{k+2}k}d^{2^{k+5}d^{k+1}k^{2}}.
\end{equation}

In \cite{LS}, the authors give the following estimate:
\begin{equation*}
\label{ls}
 \dfrac{1}{m_{k,d,\tau}} \leq \left(2^{\tau}\right)^{B^{c(k+1)}}2^{B^{c(k+1)}}, \tag{\dag}
\end{equation*}
where $B$ denotes a bound on $\max(d+1,k+1)$ and $c$ is an (unknown) universal constant.
Note that bound \eqref{bound} implies that
\[ \dfrac{1}{m_{k,d,\tau}} \leq \left(2^{\tau}\right)^{2^{k+3}d^{k+1}k}2^{2^{k+7}d^{k+2}k^{2}},
\]
giving an explicit version of estimate \eqref{ls}.
\\

Moreover, a direct application of Canny's theorem (\cite{C}), under a nondegeneracy assumption on the following polynomial system (with unknowns $m,Y_{1},\dots,Y_{s}$)
\[\begin{cases}
\label{sys}
P_{\sigma}(Y_{1},\dots,Y_{s}) =m\\
{\dfrac{\partial P_\sigma}{\partial Y_1}}(Y_{1},\dots,Y_{s})= \dots ={\dfrac{\partial P_\sigma}{\partial Y_s}}(Y_{1},\dots,Y_{s})= 0,
\end{cases} \tag{S}\]
leads to the estimate
\begin{equation*}
\dfrac{1}{m_{k,d,\tau}} \leq \big(3d2^{\tau_{\sigma}}\big)^{(k+1)d^{k+1}}= \big(6dk^{d}2^{d}2^{\tau}\big)^{(k+1)d^{k+1}}. \label{c} \tag{\ddag}
\end{equation*}
In the general case, \eqref{bound}  implies 
%%using our method, we get 
\[ \dfrac{1}{m_{k,d,\tau}} \leq \big(dk^{d}2^{d}2^{\tau}\big)^{2^{k+5}k^{2}d^{k+1}},
\]
where the main difference with \eqref{c} is the presence of the exponent  $2^{k+5}$. This essentially comes from doubling the degree of the system \eqref{sys} by replacing the equations
\[{\dfrac{\partial P_\sigma}{\partial Y_1}}= \dots ={\dfrac{\partial P_\sigma}{\partial Y_s}}= 0
\]
by the following single one
\[Q = \sum\limits_{i=1}^{s} {\left(\dfrac{\partial P_\sigma}{\partial Y_i}\right)^2} = 0,\]
in order to cover the degenerate cases as well.

\end{rem}

\end{document}